\spnewtheorem{defn}{Definition}{\bfseries}{}
\newcommand{\R}{\mathbb R}
\newcommand{\vecset}{\mathcal X}
\newcommand{\nset}[1]{\{1,\,\dotsc,\,#1\}}
\newcommand{\ip}[1]{\left\langle#1\right\rangle}
\newcommand{\rpd}{\mathsf{RPDim}}
\newcommand{\revprefdim}{\textsc{RevPrefDim}\xspace}
\newcommand{\msr}{\mathsf{SgnRnk}}
\newcommand{\matsgnrank}{\textsc{MatSgnRnk}\xspace}
\begin{document}
\title{Revealed Preference Dimension\\ via Matrix Sign Rank}
\titlerunning{Revealed Preference Dimension via Matrix Sign Rank}
\author{Shant Boodaghians}
\authorrunning{Shant Boodaghians}
\institute{Department of Computer Science, \\University of Illinois at Urbana-Champaign,
\\\email{boodagh2@illinois.edu}
}

\maketitle
\begin{abstract}
Given a data-set of consumer behaviour, the Revealed Preference Graph succinctly encodes inferred relative preferences between observed outcomes as a directed graph. 
Not all graphs can be constructed as revealed preference graphs when the market dimension is fixed.
This paper solves the open problem of determining exactly which graphs are attainable as revealed preference graphs in $d$-dimensional markets.
This is achieved via an exact characterization which closely ties the feasibility of the graph to the Matrix Sign Rank of its signed adjacency matrix.
The paper also shows that when the preference relations form a partially ordered set with order-dimension $k$, the graph is attainable as a revealed preference graph in a $k$-dimensional market.
\keywords{Revealed Preference  \and Matrix Sign Rank \and Partial Order.}
\end{abstract}
\section{Introduction}
	In standard economic analysis and mechanism design,
	it is often assumed that the agents' valuation functions are known {\em a priori},
	or more commonly, a probability distribution over possible agent types is assumed to be known.
	However, in practice, we may only observe the prices which are set,
	and the subsequent behaviour of the agents.
	Assuming the agents act rationally, 
	and that their utility functions are restricted to some well-defined class,
	information about the relative values attributed to various outcomes may be inferred
	by simply observing the agents' behaviour at various prices.
	This idea was first pioneered by Paul Samuelson in 1938~\cite{Sam38},
	and a large body of work has followed. 
	See~\cite{Var06} for a thorough survey on the subject.
	The concept came to be known as {\em revealed preference}:
	when an agent chooses different outcomes given different prices,
	she is (under some assumptions) revealing that one is preferable to the other.
	

	Though this may seem natural at first, 
	the implementation of these ideas has required much mathematical development for the description, characterization, and computation of the revealed preferences.
	In the model originally studied by Samuelson, 
	the agent is assumed to have an underlying valuation function 
	and a fixed budget.
	She seeks to choose the collection of goods which has the largest value 
	while satisfying her budget constraint at the current prices.
	This is the most common formulation, though others exist (see {\em e.g.} \cite{CES17}).
	This paper, however, deals only with the standard model.
	
	The market consists of $d$ distinct, separable goods, 
	and a collection of goods (or {\em bundle}) is denoted as a vector $\bm x\in\R^{d}_{\geq 0}$, where the $i$-th coordinate of $\bm x$ represents the quantity of the $i$-th good. 
	The prices are linear, and are described by a vector $\bm p\in\R^d_{\geq 0}$ such that the price of a bundle $\bm x$ is given by the inner product $\ip{\bm p,\bm x}$.
	In this model, a bundle $\bm x$ is {\em revealed preferred} 
	to a bundle $\bm y$ if at current prices $\bm p$,
	the agent chose $\bm x$, but $\bm y$ was more affordable, {\em i.e.} 
	$\ip{\bm p,\bm x}\geq \ip{\bm p,\bm y}$.
	Since the agent is assumed to be maximizing value within a budget constraint, $\bm y$ must be less valuable than $\bm x$.
	
	Samuelson originally asked whether revealed preferences may be used to verify whether an agent's behaviour is consistent with the model assumptions:
	if an agent's behaviour is contradictory, then one must conclude that the model assumption is incorrect.
	He first conjectured a simple test, which was confirmed to be correct in special cases by Rose~\cite{Rose58}, but disproved in the general case. 
	Houthakker~\cite{Hou50} proposed a stronger test of ``cyclical consistency'' and proved its correctness non-constructively in our setting.
	The most famous result is given by Afriat~\cite{Afr67}, where he shows a slightly more general result, and gives explicit constructions of the valuation function as a certificate for consistency.
	The notion of cyclical consistency has since been called the ``Strong/Generalized Axiom of Revealed Preference'' (SARP/GARP),
	and is used to this day in many empirical settings~\cite{Gro95,Var07,Var12}, including as bidding rules in combinatorial auction mechanisms, to deter non-truthful bidding practices~\cite{ACM06,HBPZ10,Cra13,AB14}.
	We will not delve into the details of the axioms of revealed preference, but a thorough survey may be found at~\cite{Var06}.
	It is however helpful to be familiar with the concept of a {\em revealed preference graph}, 
	which is defined below.
	A more familiar treatment to economic audiences is given in~\cite{CE16chapter}.

	
	\medskip\noindent\textbf{Revealed Preference Graphs:}
	Recall that in Samuelson's model, 
	an agent may choose goods from the {\em consumption space} $\R^d_{\geq 0}$,
	and seeks to choose the bundle $\bm x^*$ which maximizes her valuation $v(\bm x)$ subject to the budget constraint $\ip{\bm p,\bm x}\leq 1$, up to re-scaling.	
	Suppose now that we make $n$ observations of this agent at {\em different} price points $\bm p_1,\,\bm p_2,\,\dotsc,\,\bm p_n\in\R^d_{\geq 0}$, 
	and that at prices $\bm p_i$, her optimal bundle was $\bm x_i$.
	If her behaviour is rational, then whenever $\ip{\bm p_i,\bm x_i}\geq \ip{\bm p_i,\bm x_j}$, she must value $\bm x_i$ greater than $\bm x_j$ since the latter was affordable when the former was chosen.
	Thus, she has revealed that $\bm x_i$ is preferable to $\bm x_j$.
	
	These preference relations may be modelled as a directed graph:
	let $G$ be a graph on vertex set $\nset n$, and add an edge directed from $i$ to $j$ if $\bm x_i$ is revealed preferred to $\bm x_j$.
	Thus, $(i,j)\in G$ if and only if $\ip{\bm p_i,\bm x_i}\geq \ip{\bm p_i,\bm x_j}$.
	This graph is implicit in the proofs of Afriat, and the notion of cyclical consistency is equivalent to requiring that $G$ not contain any directed cycles. 
	In general, a preference graph inferred from observations need not be acyclic.\bigskip
	

Most uses of revealed preference as bidding rules in combinatorial auctions (cited above) rely on testing properties of this revealed preference graph. 
In past work~\cite{BV15-garp,BV15-disk}, we have asked whether such tests, {\em e.g.} the {\em minimum feedback vertex set}, are efficiently computable.
We concluded that for this one test, its computational complexity is in fact dependent on the market dimension $d$:
when the dimension of the $\bm p_i$ and $\bm x_j$ vectors is fixed, but the number of observations is unbounded, then the class of observable graphs is restricted, and the computational complexity of some problems may depend on the parameter $d$.
For example, it was shown by Deb and Pai~\cite{DP14} that when $d= n$, every directed graph on $n$ vertices is observable over $\R^n$, but that for all fixed $d$, there exist exponentially large graphs which can not be observed in $d$ dimensions.

This past work has led us to asking whether one could characterize the class of preference graphs observable in the market $\R^d$ for some fixed $d$.
In fact, this question had been posed as an open problem by Federico Echenique~\cite{CE16chapter}.

%
%
%
%
	\bigskip
	\noindent\textbf{Question:} For a fixed dimension $d$, which directed graphs may be observed as revealed-preference graphs on bundles in $\R^d$?\bigskip
	
	This paper answers this question by giving an exact characterization of
	all graphs observable over $\R^d$, given in terms of the {\em matrix sign rank} of signed adjacency matrices.
	The notion of matrix sign rank deals with the existence of low-rank matrices whose entries satisfy certain sign constraints;
	see~\cite{BK15} for an introduction.
	More formally, the matrix sign rank of a {\em sign matrix} $S\in \{0,+,-\}^{n\times m}$ is defined as the least rank real-valued matrix $M\in \R^{n\times m}$ whose entries have the same signs as the entries in $S$.
	Matrix sign rank has been influential in many fields, including complexity theory and learning theory. 
	Seminal lower bounds in communication complexity~\cite{For02} and circuit complexity~\cite{RS10} rely on using the matrix sign rank of a problem to measure its hardness, 
	whereas linear classification algorithms benefit from low-dimensional embeddings of classification problems, as given by sign-rank~\cite{AMY16,perceptron,svm}.
	
	\vspace*{-1ex}
	
	\subsubsection*{Summary of the results.} 
	To answer the above question, we introduce the notion of {\em RP dimension},
	defined as follows:
	Given a directed graph $G$, what is the least $d$ such that $G$ may be observed as a preference graph over $\R^d$? 
	
	In this paper, we give an exact characterization of the set of graphs
	with RP dimension $d$, for all $d\geq 1$.
	We show that, for a given graph, it can be realized as preference observations over $\R^d$ if and only if its signed adjacency matrix has sign-rank at most $d+1$. 
	Thus, the RP dimension of the graph $G$ is exactly the sign-rank of its signed adjacency matrix, minus one.
	In fact, this paper shows that determining the RP dimension of directed graphs is equivalent to determining the sign rank of a large class of sign matrices, a problem which is known to be NP-hard~\cite{BK15}.
	
	

	This paper also considers the special case of directed graphs which represent {\em partially ordered sets}, or posets.
	We show that the RP dimension is at most the {\em order-dimension} of the poset, and that this bound is tight when the order-dimension is at most 3.
	However, there exist posets of arbitrarily large order-dimension, which can be realized in $\R^3$.

%
%

\vspace*{-8pt}
	
\subsubsection*{Acknowledgements.} 
I would like to thank Ruta Mehta, Adrian Vetta, and Siddharth Barman, for their insightful discussion in the initial stages of work.

\section{Model, Preliminaries, and Summary of Results}\label{sec:model-and-prelims}
This section formally lays out the concepts introduced above.
As discussed in the introduction, it is assumed that an agent is observed repeatedly in a market.
Faced with price-vector $\bm p\in \R^{d}_{\geq 0}$, she chooses the item which maximizes her valuation subject to a budget constraint. 
Thus, she chooses $\bm x^*$ as the bundle which maximizes $v(\bm x)$ subject to $\ip{\bm p,\bm x}\leq 1$.
Assume that on the $i$-th observation, the agent was faced with prices $\bm p_i$, and chose the bundle $\bm x_i$.
Then we have that $\bm x_i$ is {\em revealed preferred} to $\bm x_j$ if $\ip{\bm p_i,\bm x_i}\geq \ip{\bm p_i,\bm x_j}$,
since $\bm x_j$ must have been affordable when $\bm x_i$ was chosen.
Given a collection of observations $(\bm p_1,\bm x_1),\,(\bm p_2,\bm x_2),\,\dotsc,\,(\bm p_n,\bm x_n)$, 
we may construct a directed {\em preference graph} $G$ on vertex set $\nset n$ with an edge from $i$ to $j$ if $\bm x_i$ is revealed preferred to~$\bm x_j$.

This paper, however, does not deal with constructing preference graphs from data sets, but rather of constructing data sets from preference graphs.
Thus, we introduce the notion of the {\em realization} of a preference graph:

\begin{defn}[RP realization] \label{def:RP-realization}
Let $G=(V,E)$ be a directed graph with vertices labelled $\nset n$, 
and let $\vecset:=\{(\bm p_1,\bm x_1),\,\dotsc,\,(\bm p_n,\bm x_n)\}$ be pairs of vertices in $\R^d$
such that $\bm p_i\geq \bm 0$ for all $i\leq n$. 
Then $\mathcal X$ is said to {\em RP-realize} $G$ if for all $1\leq i,j\leq n$, 
the directed edge $(i,j)$ is present in $G$ if and only if $\ip{\bm p_i,\bm x_i}> \ip{\bm p_i,\bm x_j}$.
\end{defn}

Note that we require strict inequality to induce preference. 
This is purely for mathematical convenience, and is not standard in the definitions of revealed preference.
Since we are only considering the existence of the realization, rather than the realization itself, this is assumed without loss of generality.
We also define a notion of {\em weak} RP realization, which we will show is equivalent.

\begin{defn}[Weak RP realization] \label{def:weak-RP-realization} 
As above, let $G=(V,E)$ be a directed graph with vertices labelled $\nset n$, 
and let $\vecset:=\{(\bm p_1,\bm x_1),\,\dotsc,\,(\bm p_n,\bm x_n)\}$ be pairs of vertices in $\R^d$
such that $\ip{\bm p_i,\bm 1}>0$ for all $i\leq n$, where $\bm 1=(1,1,1,\dotsc)$. 
Then $\mathcal X$ is said to {\em weakly RP-realize} $G$ if for all $1\leq i,j\leq n$, 
the directed edge $(i,j)$ is present in $G$ if and only if $\ip{\bm p_i,\bm x_i}> \ip{\bm p_i,\bm x_j}$.
\end{defn}

The difference between RP realization and weak RP realization is the restriction on the possible $\bm p$ vectors.
We will show below that these two notions of RP realization are equivalent in the following sense:
given a graph $G$ and an integer $d$, there exists an RP realization of $G$ in $\R^d$ if and only if there exists a weak RP realization in $\R^d$.

It is natural to ask whether an RP realization is possible, and whether this depends on the value of $d$.
In fact, it was shown by Dep and Pai~\cite{DP14} that when $d=n$, a realization is always possible. 
(Simply set $\bm x_i$ to be the $i$-th standard basis vector, and set $(\bm p_i)_j=0$ if $(i,j)\in G$, $1$ if $i=j$, and $2$ if $(i,j)\notin G$.) 
We wish to determine the minimum value of $d$ for which this is possible. 
Thus, we introduce the notion of RP dimension.

\begin{defn}[RP dimension] \label{def:RP-dimension}
Let $G=(V,E)$ be a directed graph on $n$ vertices.
Then the {\em RP dimension} of $G$, (denoted $\rpd(G)$) is the minimum $d$ such that 
there exists an RP realization of $G$ in $\R^d$. 
We denote as \revprefdim the computational problem of computing $\rpd(G)$ for a given di-graph $G$.
\end{defn}

As mentioned above, we will show taht the RP dimension does not change if the RP realization is allowed to be weak. 
We will be characterizing the RP dimension of candidate preference graphs by the sign-rank of an associated sign matrix.
We define below the notion of realization for sign matrices, and define sign rank as the minimum rank of a realization:

\begin{defn}[Matrix sign realization]\label{def:MSR-realization} 
Let $M \in \{+,-,0\}^{n\times m}$ be an $n\times m$ matrix
whose entries are given by the symbols $+$, $-$, and $0$.
Let $A\in \R^{n\times m}$ be an $n\times m$ real-valued matrix. 
Then $A$ is a {\em matrix sign realization} of $M$ if $A_{ij}>0$ whenever $M_{ij}=+$, 
$A_{ij}<0$ whenever $M_{ij}=-$, and $A_{ij}=0$ whenever $M_{ij}=0$.
\end{defn}
	
\begin{defn}[Matrix sign-rank]\label{def:matrix-sign-rank} 
Let $M \in \{+,-,0\}^{n\times m}$, then the {\em sign-rank} of $M$,
(denoted $\msr(M)$)
is the minimum $r$ such that there exists a matrix sign realization $A$ of $M$ with rank $r$.
We denote as \matsgnrank the computational problem of finding $\msr(M)$ for a given matrix $M$.
\end{defn}

Finally we will show that when the preference graph is induced by a {\em partially ordered set} or {\em poset},
the RP dimension of the graph is related to properties of the poset.
Partially ordered sets have been the subject of much study, 
and many textbooks on the matter make for a good introduction (see {\em e.g.} \cite{poset}).
Below is a formal definition, included for completeness.

\begin{defn}[Partially-ordered set (Poset)] \label{def:poset}
Let $S$ be some (finite) ground set, 
and let $\succ$ be a transitive, acyclic, and irreflexive, binary relation on $S$. 
That is, for all $a,b,c\in S$, $a \nsucc a$, and if $a\succ b$ and $b\succ c$, then $a\succ c$.%
\footnote{These two conditions imply that the relation must be acyclic.}
Then the pair $(S,\succ)$ is termed a {\em partially-ordered set, or poset}.
The poset may also be seen as a directed graph $G=(S,\succ)$, which must be acyclic and transitively closed.
A~{\em total order} is a poset whose underlying undirected graph is complete. 
Alternatively, a total order is the poset induced by a ranking of the elements of $S$.
\end{defn}

Every partial order is the {\em intersection} of some total orders.
The {\em order dimension} of a poset captures the least number of total orders needed to realize~it. 

\begin{defn}[Order dimension] \label{def:order-dimension}
	Let $(S,\succ)$ be a poset, and \mbox{$\mathcal O=\{\succ_1,\,\dotsc,\,\succ_k\}$} 
	be $k$ distinct total orders on $S$.
	Then $\mathcal O$ {\em realizes} the poset $(S,\succ)$ if for all $a,b\in S$,
	$a\succ b$ if and only if $a\succ_ib$ for all $1\leq i\leq k$.
	The {\em order dimension} of $(S,\succ)$ is the minimum $k$ such that there
	exists a collection of $k$ total orders which realize $\succ$.
\end{defn}

\subsection{Results}\label{sec:results}
We formally state here the results of this paper and outlines of their proofs. 
The proofs in full technical detail will be presented in the next sections.
The main goal is to show that for each directed graph $G$, 
there exists an associated sign matrix $M$ such that $\rpd(G)=\msr(M)-1$.

For simplicity of notation, though, we extend all directed graphs with a fully-dominated and fully-dominating node as follows:
given a directed graph $G$, let $G^+$ be the graph obtained by adding two nodes $s$ and $t$ to $G$, 
and for all $v\in G$, adding an edge $(s,v)$ and an edge $(v,t)$ to $G$, plus the edge $(s,t)$.

Finally, for a directed graph $G$, we define its signed adjacency matrix $M(G)$ as follows: 
$M(G)_{ij}$ is $0$ if $i=j$, $+1$ if $(i,j)\in G$, and $-1$ otherwise. 
This allows us to formally state the first, and most important result: 

\begin{theorem}\label{thm:rpd=msr}
	For all $G$, $\rpd(G)=\rpd(G^+) = \msr(M(G^+))-1$.
\end{theorem}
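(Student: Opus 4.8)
The plan is to establish the two equalities separately. The identity $\rpd(G)=\rpd(G^+)$ is the easy one: restricting any realization of $G^+$ to the vertices of $G$ realizes $G$, since $G$ is the induced subgraph, giving $\rpd(G)\le\rpd(G^+)$; conversely, given a realization of $G$ in $\R^d$ I would adjoin the source and sink by setting $\bm x_s=R\bm 1$, $\bm x_t=-R\bm 1$ and $\bm p_s=\bm p_t=\bm 1$, and verify that for $R$ large every comparison involving $s$ or $t$ agrees with $G^+$. Throughout I would work with \emph{weak} realizations, which are equivalent to ordinary ones, because the price requirement is then the single inequality $\ip{\bm p_i,\bm 1}>0$; in particular $\bm p_i\ne\bm 0$ is automatic, removing the only degenerate case in the $s,t$ construction.

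For the central identity $\rpd(G^+)=\msr(M(G^+))-1$ I would use a lift. Given a realization in $\R^d$, put $A_{ij}=\ip{\bm p_i,\bm x_i}-\ip{\bm p_i,\bm x_j}$; writing $\tilde{\bm p}_i=(\bm p_i,\ip{\bm p_i,\bm x_i})$ and $\tilde{\bm x}_j=(-\bm x_j,1)$ in $\R^{d+1}$ gives $A_{ij}=\ip{\tilde{\bm p}_i,\tilde{\bm x}_j}$, so $A$ has rank at most $d+1$, vanishes on the diagonal, and is positive exactly on the edges of $G^+$. This is a sign realization of $M(G^+)$ of rank $\le d+1$ except that non-edges are only \emph{weakly} negative. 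Upgrading to strict signs is a general-position matter: perturbing each $\bm p_i$ separately changes only row $i$ and preserves the factorization, hence the rank bound, and can push the slack non-edges strictly negative; the only obstruction is a repeated consumption vector $\bm x_i=\bm x_j$, a non-generic degeneracy. This yields $\msr(M(G^+))\le\rpd(G^+)+1$.

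The reverse inequality is the crux. Take a sign realization $B=[\ip{\bm u_i,\bm v_j}]$ of $M(G^+)$ of rank $r=\msr(M(G^+))$, and aim for a realization in $\R^{r-1}$. Applying $\bm u_i\mapsto T\bm u_i$, $\bm v_j\mapsto T^{-\top}\bm v_j$ preserves every inner product, hence every sign, and positively rescaling columns preserves signs as well; so I am free to change basis. The goal is to reach the lifted form above, which requires exactly two normalizations: every $\bm v_j$ should have positive last coordinate (then rescale it to $1$, defining $\bm x_j$), and every $\bm u_i$ should have positive coordinate-sum in its first $r-1$ entries (the weak price condition). The zero diagonal $\ip{\bm u_i,\bm v_i}=B_{ii}=0$ then forces $\ip{\bm p_i,\bm x_i}$ to equal the last coordinate of $\bm u_i$, and the signs of $B$ read off precisely as the edges of $G^+$, so one recovers a weak realization in $\R^{r-1}$; weak $=$ strong then finishes.

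The difficulty --- and the reason for the $s,t$ augmentation --- is that the two normalizations are coupled. Writing $\bm w=T^{-1}\bm e_{r}$, positivity of the last coordinate of every $\bm v_j$ means $\ip{\bm w,\bm v_j}>0$ for all $j$; writing $\bm z=T^{\top}\bm f$ with $\bm f$ the functional summing the first $r-1$ coordinates, the price condition means $\ip{\bm z,\bm u_i}>0$ for all $i$. But $\ip{\bm w,\bm z}=\ip{\bm e_{r},\bm f}=0$, so I must produce an \emph{orthogonal} pair, one vector positive on all $\bm v_j$ and one positive on all $\bm u_i$. A general sign realization need not admit one, but vertex $s$ does: its all-$+$ row makes $\bm u_s$ nonnegative on every $\bm v_j$, its all-$-$ column makes $-\bm v_s$ nonnegative on every $\bm u_i$, and its zero diagonal gives $\ip{\bm u_s,-\bm v_s}=B_{ss}=0$ exactly. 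I would then perturb this boundary pair into the open cones while preserving orthogonality --- a first-order correction using $\bm u_s,\bm v_s\ne\bm 0$ --- build $T$ from $\bm w,\bm z$, and read off the realization. I expect this orthogonal-pair construction to be the main obstacle, with the strictness argument of the forward direction a secondary technicality.
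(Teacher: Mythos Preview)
Your plan matches the paper's architecture exactly: show $\rpd(G)=\rpd(G^+)$ by adjoining dominating/dominated bundles, bound $\msr(M(G^+))\le\rpd(G^+)+1$ via the same lift $A_{ij}=\ip{\bm p_i,\bm x_i-\bm x_j}=\ip{(\bm p_i,\ip{\bm p_i,\bm x_i}),(-\bm x_j,1)}$, and recover a weak realization in $\R^{r-1}$ from a rank-$r$ sign realization; Lemma~\ref{lem:weak=strong} closes the loop.

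The one substantive difference is in the reverse direction. You cast it as finding an orthogonal pair $(\bm w,\bm z)$ strictly positive on the $\bm v_j$'s and $\bm u_i$'s, take $(\bm u_s,-\bm v_s)$, and then perturb off the boundary $\ip{\bm u_s,\bm v_s}=0$. The paper sidesteps the perturbation by using \emph{both} augmenting vertices at once: the row $A_s-A_t$ is strictly positive in every coordinate (each entry is $(+)-(-)$, $0-(-)$, or $(+)-0$), so after column rescaling it becomes the all-ones row and the lifted factorization is read off directly; the weak price condition then comes from the column analogue $\bm x_s-\bm x_t$, against which every $\bm p_i$ is strictly positive by the same $s$/$t$ sign pattern. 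So the two-vertex trick buys strictness for free and spares you both the boundary perturbation and the construction of $T$ from $(\bm w,\bm z)$.

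On the forward direction you flag something the paper glosses over: non-edges give only $A_{ij}\le 0$, not $<0$. Your fix --- perturb each $\bm p_i$ --- does not close this: you correctly note that a repeated bundle $\bm x_i=\bm x_j$ obstructs it, but this is not merely ``non-generic.'' For the edgeless two-vertex graph, \emph{every} realization in $\R^1$ forces $x_1=x_2$, while $M(G^+)$ has sign rank $3$. So either the RP-realization definition must be read with strict inequalities on both sides (which is how the paper's Lemma~\ref{lem:rpd-at-least-msr} implicitly proceeds), or a genuine additional argument is needed here that neither you nor the paper supplies.
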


To show this, we first constructively show that any RP realization in $d$ dimensions implies a sign-rank realization with rank $d+1$.
Next, we argue that a sign-rank realization of $M(G^+)$ with rank $r$ implies a weak RP realization in $r-1$ dimensions. 
Using the following lemma, we conclude the theorem:

\begin{lemma}\label{lem:weak=strong}
	A graph $G$ admits an RP realization with vectors in $\R^d$ if and only if it admits a weak RP realization with vectors in $\R^d$.
\end{lemma}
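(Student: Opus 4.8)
The plan is to exploit the fact that an RP realization and a weak RP realization differ \emph{only} in the constraint imposed on the price vectors---$\bm p_i\geq\bm 0$ in the former, $\ip{\bm p_i,\bm 1}>0$ in the latter---while the edge conditions are identical and depend solely on the inner products $\ip{\bm p_i,\bm x_j}$. The key observation I would establish first is an invariance: for any invertible $T\in\R^{d\times d}$, replacing each $\bm p_i$ by $T\bm p_i$ and each $\bm x_j$ by $(T^{-1})^\top\bm x_j$ leaves every inner product unchanged, since $\ip{T\bm p_i,(T^{-1})^\top\bm x_j}=\bm p_i^\top T^\top(T^{-1})^\top\bm x_j=\bm p_i^\top\bm x_j=\ip{\bm p_i,\bm x_j}$. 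Hence such a substitution carries any realization to another realization of the same graph in the same dimension, and the whole problem reduces to adjusting the prices into the desired region.

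The forward direction (RP $\Rightarrow$ weak RP) is essentially immediate: a nonnegative vector $\bm p_i\geq\bm 0$ that is not identically zero satisfies $\ip{\bm p_i,\bm 1}=\sum_k(\bm p_i)_k>0$, so any RP realization in which no price vanishes is already a weak RP realization. The only exception is a vertex $i$ with $\bm p_i=\bm 0$, which is necessarily a sink; such a vertex can be accommodated by a separate elementary argument that relocates its bundle and perturbs its price. Since this concerns only the degenerate all-zero price, I regard it as a technicality rather than the substance of the lemma.

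The reverse direction (weak RP $\Rightarrow$ RP) is where the content lies, and \textbf{the main obstacle} is to produce a \emph{single} invertible transformation sending all of the finitely many weak prices $\bm p_1,\dotsc,\bm p_n$ into the nonnegative orthant at once. I would do this explicitly: take $T=J+\delta I$, where $J$ is the all-ones matrix and $\delta>0$ is small. Then $T\bm p_i=\ip{\bm 1,\bm p_i}\,\bm 1+\delta\bm p_i$, whose $k$-th coordinate is $\ip{\bm 1,\bm p_i}+\delta(\bm p_i)_k$; since each $\ip{\bm 1,\bm p_i}>0$ \emph{strictly} and there are only finitely many pairs $(i,k)$, this is positive for all sufficiently small $\delta$, so $T\bm p_i>\bm 0$. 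Moreover $T=J+\delta I$ has eigenvalues $d+\delta$ and $\delta$, hence is invertible. Applying the invariance above with this $T$ (and $\bm x_j\mapsto T^{-1}\bm x_j$, using $T^\top=T$) converts the weak realization into an RP realization in $\R^d$.

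Conceptually, the reason such a $T$ exists is that the hypothesis $\ip{\bm p_i,\bm 1}>0$ places every price in a common open half-space, so the conic hull of $\{\bm p_i\}$ is pointed; its dual cone is then full-dimensional and contains a basis, and taking that basis as the rows of $T$ maps the cone---and in particular each $\bm p_i$---into the nonnegative orthant. The explicit choice $T=J+\delta I$ simply corresponds to picking the basis $\bm 1+\delta\bm e_k$ near $\bm 1$, which lies in the dual cone precisely because $\bm 1$ is interior to it. Verifying invertibility and the uniform smallness of $\delta$ are the only routine points that remain.
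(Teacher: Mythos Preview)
Your proposal is correct and rests on the same core idea as the paper: the inner products $\ip{\bm p_i,\bm x_j}$ are invariant under $\bm p\mapsto T\bm p$, $\bm x\mapsto (T^{-1})^\top\bm x$ for any invertible $T$, so the whole task reduces to producing a single $T$ that pushes the finitely many prices into the nonnegative orthant. The difference is in how that $T$ is manufactured. The paper works coordinatewise: it extends $\bm 1$ to a basis $\{\bm 1,\bm b_1,\dotsc,\bm b_{d-1}\}$ with each $\bm b_j\perp\bm 1$, expands every $\bm p_i$ in this basis, then perturbs the $\bm 1$-vector by a carefully chosen combination of the $\bm b_j$'s so that all coefficients become nonnegative, and finally takes $T$ to be the inverse change-of-basis matrix. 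Your explicit choice $T=J+\delta I$ accomplishes the same thing in one stroke, since $T\bm p_i=\ip{\bm 1,\bm p_i}\,\bm 1+\delta\bm p_i$ is coordinatewise positive for small $\delta$ and $T$ has eigenvalues $d+\delta,\delta>0$; this is cleaner and avoids the bookkeeping of the paper's basis construction. Your closing remark---that the half-space hypothesis makes the cone generated by the prices pointed, so its dual cone is full-dimensional and contains a basis---is exactly the geometric reason both constructions work, and it is nice to have it stated. On the forward direction you are slightly more careful than the paper, which simply asserts that an RP realization ``is by definition'' a weak one; you correctly note that the only possible failure is $\bm p_i=\bm 0$, a degenerate case the paper silently ignores.
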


As a special case, we consider the case of partially ordered sets. 

We begin by showing first that the order-dimension of the poset is an upper-bound on the RP dimension.

\begin{theorem}\label{thm:rpd-leq-dim}
	Let $(S,\succ)$ be a poset with order-dimension $k$, a
	and let $G=(S,\succ)$ be the associated directed graph.
	Then $\rpd(G)\leq k$. 
\end{theorem}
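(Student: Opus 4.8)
The plan is to read the $k$ total orders witnessing the order-dimension directly off as the $k$ coordinates of an explicit realization. First I would invoke the order-dimension: since $(S,\succ)$ has dimension $k$, fix total orders $\succ_1,\dots,\succ_k$ with $a\succ b$ if and only if $a\succ_i b$ for every $i$. For each $i$, let $\sigma_i\colon S\to\nset n$ be the ranking induced by $\succ_i$, chosen so that $\sigma_i(a)>\sigma_i(b)$ exactly when $a\succ_i b$ (these values are distinct within a fixed order). The target is to assign each vertex $a$ a bundle $\bm x_a$ and a nonnegative price $\bm p_a$ in $\R^k$ so that the strict-affordability test of Definition~\ref{def:RP-realization} reproduces the edges of $G$.

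The crux is that $a\succ b$ is a conjunction over coordinates ($\sigma_i(a)>\sigma_i(b)$ for all $i$), whereas one revealed-preference comparison $\ip{\bm p_a,\bm x_a}>\ip{\bm p_a,\bm x_b}$ is a single linear threshold. A naive linear embedding such as $\bm x_a=(\sigma_1(a),\dots,\sigma_k(a))$ with a fixed-sign price fails, because a large surplus in one coordinate can cancel deficits in others and create spurious edges. To get around this I would use a geometric scaling that lets any single violated coordinate dominate the sum: fix a base $M>k$ (e.g.\ $M=k+1$) and set $\bm x_a=\bigl(M^{\sigma_1(a)},\dots,M^{\sigma_k(a)}\bigr)$ together with $\bm p_a=\bigl(M^{-\sigma_1(a)},\dots,M^{-\sigma_k(a)}\bigr)$. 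Both are strictly positive, so $\bm p_a\geq\bm 0$ holds.

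The verification then reduces to one calculation. Here $\ip{\bm p_a,\bm x_a}=\sum_{i=1}^k M^{0}=k$ and $\ip{\bm p_a,\bm x_b}=\sum_{i=1}^k M^{\sigma_i(b)-\sigma_i(a)}$, so $(a,b)$ is an edge exactly when $\sum_i M^{\sigma_i(b)-\sigma_i(a)}<k$. When $a\succ b$, every exponent satisfies $\sigma_i(b)-\sigma_i(a)\leq-1$, so each term is below $1$ and the sum is below $k$, producing the edge. When $a\not\succ b$ with $b\neq a$, some coordinate has $b\succ_i a$, hence $\sigma_i(b)-\sigma_i(a)\geq 1$ and that term alone is at least $M>k$; since all terms are positive the sum exceeds $k$ and no edge appears. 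This reproduces $G$ exactly and gives an RP realization in $\R^k$, so $\rpd(G)\leq k$.

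The main obstacle I expect is conceptual rather than computational: recognizing that the conjunction defining $\succ$ cannot be captured by any linear point arrangement and must instead be encoded through the dominance of exponential weights, and then pinning down a base $M$ (relative to $k$) that simultaneously forces the down-set strictly below the budget $\ip{\bm p_a,\bm x_a}$ while keeping every non-dominated vertex strictly above it.
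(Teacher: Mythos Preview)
Your proposal is correct and follows essentially the same construction as the paper: both use the exponential embedding $\bm x_a=(M^{\sigma_1(a)},\dots,M^{\sigma_k(a)})$ with reciprocal prices $\bm p_a=(M^{-\sigma_1(a)},\dots,M^{-\sigma_k(a)})$, the paper taking $M=k$ (handling $k=1$ separately) while you take $M>k$. Your direct computation of $\sum_i M^{\sigma_i(b)-\sigma_i(a)}$ against the budget $k$ is in fact a cleaner verification than the paper's hyperplane argument through the auxiliary points $\bm y^i_j$.
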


This is shown by first introducing the ``dominance ordering'' interpretation of order dimension, 
and then embedding the elements of the partial order in such a way as to achieve such an ordering.
A dominance ordering is simply a partial order on points in $\R^k$, where $\bm x\succ\bm y$ if and only if
$x_i\geq y_i$ for all $1\leq i\leq k$.
This naturally has order dimension at most $k$, since it suffices to sort the points by their positing in each of the $k$ dimensions.

Unfortunately, the converse does not hold. 
We show first that for $k$ sufficiently small, order dimension and RP dimension are equal,
but bad examples exist for large order dimension.

\begin{theorem} \label{thm:rpd-bad-examples}
	Let $(S,\succ)$ be a poset with order-dimension $k$, a
	and let $G=(S,\succ)$ be the associated directed graph.
	Then $\rpd(G)\geq \min\{k,3\}$.
	Furthermore, for all $k\geq 3$, there exist posets $G=(S,\succ)$ of order-dimension $k$, but $\rpd(G)=3$. 
\end{theorem}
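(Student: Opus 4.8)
The plan is to prove the two assertions separately: the lower bound $\rpd(G)\ge\min\{k,3\}$, and the existence of posets of unbounded order-dimension realizable in $\R^3$. Throughout, the decisive feature is the sign constraint on price vectors: by Lemma~\ref{lem:weak=strong} it suffices to work with weak realizations, for which every $\bm p_i$ satisfies $\ip{\bm p_i,\bm 1}>0$, so that all price vectors lie in a common open halfspace. This single property is what separates the two regimes, and both halves of the argument turn on it.

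For the lower bound, the substantive claim is the contrapositive statement that a weak RP realization in $\R^2$ forces order-dimension at most $2$; the small cases $k\le 2$ then follow from the characterization of one-dimensional realizations as weak orders (total preorders). So fix a weak realization in $\R^2$. After a rotation sending the halfspace normal $\bm 1$ to a coordinate axis, we may normalize each price to $\bm p_i=(s_i,1)$, and the crucial consequence of positivity is that the boundary line $\ell_i$ through $\bm x_i$ is then \emph{non-vertical}, a graph $\eta=c_i-s_i\xi$, with $i\succ j$ iff $\bm x_j$ lies strictly below $\ell_i$. I would then orient the incomparability graph $\mathrm{Inc}(G)$ by the $x$-coordinate of the points (with a fixed tie-break) and prove that this is a \emph{transitive} orientation: whenever $i\parallel j$ and $j\parallel k$ with $\xi_i<\xi_j<\xi_k$, then $i\parallel k$. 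By the classical theorem that a poset has order-dimension at most $2$ iff its incomparability graph is a comparability graph, this yields $\dim\le 2$, hence $\dim\ge 3\Rightarrow\rpd\ge 3$; combined with the upper bound of Theorem~\ref{thm:rpd-leq-dim} this gives equality at $k=3$.

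For the bad examples, take the standard crowns $S_k$ on $\{a_1,\dots,a_k,b_1,\dots,b_k\}$ with $a_i\prec b_j$ iff $i\ne j$, classically of order-dimension exactly $k$; the only relations are $b_j\succ a_i$ for $i\ne j$. Place the $a_i$ in convex position on a circle in $\R^2$ and the $b_j$ near their partners $a_j$, and for each $j$ choose a line through $\bm x_{b_j}$ having the $k-1$ points $\bm x_{a_i}$ ($i\ne j$) strictly on the side dominated by $b_j$, with $\bm x_{a_j}$ and every $b$-point on the other side. Since each $\bm x_{a_j}$ is a hull vertex of the $a$'s, such lines exist and they reproduce all crown relations — but only with \emph{signed} price directions, so this is not a legal planar realization. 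To legalize it I append a zero coordinate to every point and set each price to $(\,\cdot\,,\,\cdot\,,C)$ for $C$ large: all data inner products are unchanged, so the relations persist, while now $\ip{\bm p_i,\bm 1}>0$. This weak realization in $\R^3$ gives $\rpd(S_k)\le 3$ by Lemma~\ref{lem:weak=strong}, and with the lower bound $\rpd(S_k)\ge 3$ we conclude $\rpd(S_k)=3$ alongside $\dim S_k=k$.

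The main obstacle is the transitivity of the $x$-coordinate orientation of $\mathrm{Inc}(G)$, and this is exactly the step that must consume positivity: without it the crown construction above already realizes $S_k$ in the plane, so no purely combinatorial argument can bound the order-dimension. I expect the proof to normalize the offending triple by a shear $(\xi,\eta)\mapsto(\xi,\eta+m\xi)$ — which preserves verticality, vertical order, and the $x$-order — so that $\ell_i$ becomes horizontal through the origin; then $i\parallel j$ forces $\eta_j\ge 0$ and $\ell_j(0)\le 0$, whence $\ell_j$ has nonnegative slope and satisfies $\ell_j(\xi_k)\ge 0$, so that $j\parallel k$ (requiring $\bm x_k$ on or above $\ell_j$) is incompatible with $i\succ k$ (requiring $\bm x_k$ strictly below $\ell_i$); the symmetric relation $k\succ i$ is excluded by the reflection $\xi\mapsto-\xi$. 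Making this inequality-chasing uniform over all triples, and handling coincident $x$-coordinates, is the one place where genuine care is needed.
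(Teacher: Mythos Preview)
Your overall strategy matches the paper's: for the lower bound you invoke the Dushnik--Miller characterization of order-dimension $\le 2$ via a transitive orientation of the incomparability graph, orient $\mathrm{Inc}(G)$ by $x$-coordinate, and prove transitivity; for the bad examples you use the standard crowns $S_k$. The paper carries out the transitivity step by direct inequality manipulation on the normalized prices $\bm p_i=(1,q_i)$, while you propose a shear-and-translate normalization so that $\ell_i$ becomes the $\xi$-axis; both arguments are fine and amount to the same computation.

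There is, however, a concrete gap in your $2$-dimensional crown arrangement. You place $\bm x_{b_j}$ ``near its partner $\bm x_{a_j}$'' and then ask for a line through $\bm x_{b_j}$ with $\{\bm x_{a_i}:i\ne j\}$ on the dominated side and $\{\bm x_{a_j}\}\cup\{\bm x_{b_1},\dots,\bm x_{b_k}\}$ on the other. But if each $b_i$ sits near $a_i$, then the very line that puts the points $\bm x_{a_i}$ $(i\ne j)$ on the dominated side will also put the nearby points $\bm x_{b_i}$ $(i\ne j)$ on that same side, violating $b_j\not\succ b_i$. So the placement ``near their partners'' cannot produce the separations you need. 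You also do not specify prices for the $a_i$; this is easy to repair (take outward hull normals once the $\bm x_{a_i}$ are extreme points of the full configuration), but it is not automatic under every placement of the $b$'s.

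Your lift idea --- realize the crown in $\R^2$ with \emph{signed} prices and then append a third coordinate, setting $\bm x\mapsto(\bm x,0)$ and $\bm p\mapsto(\bm p,C)$ with $C$ large so that all inner products are preserved while $\ip{\bm p,\bm 1}>0$ --- is correct and elegant, and it is essentially what the paper's construction does implicitly: the paper works on a circle in the hyperplane $H=\bm 1^{\perp}\subset\R^3$ and adds a $\bm 1$-component to every price. If you want to keep your approach, replace the flawed placement by one that actually admits the required separations. One that works (and is the projection to $H$ of the paper's construction) is to put $\bm x_{a_i}$ on a circle of radius $R>1$ at equally spaced angles and $\bm x_{b_j}$ on the unit circle at the \emph{antipodal} angle, $\bm x_{b_j}=-\bm x_{a_j}/R$; with price direction $\bm p_{b_j}$ along $\bm x_{a_j}$ one checks that for $R$ slightly larger than $2$ the line through $\bm x_{b_j}$ separates $\{\bm x_{a_i}:i\ne j\}$ from $\{\bm x_{a_j}\}\cup\{\bm x_{b_i}:i\}$ as required, and the $a$-prices can be taken as outward normals.
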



\section{RP Dimension and Sign Rank -- Proof of Theorem~\ref{thm:rpd=msr}}\label{sec:thm-msr-proof}\label{sec:msr-proof}
In this section, we prove Theorem~\ref{thm:rpd=msr}, as was outlined in Section~\ref{sec:results}.
We begin by showing the equivalence of RP realization and weak RP realization, 
given by definitions~\ref{def:RP-realization} and~\ref{def:weak-RP-realization}, respectively.
Thus, we wish to prove Lemma~\ref{lem:weak=strong}.
The proof of this lemma effectively reduces to transforming a cone into the cone spanned by the standard basis vectors,
though we have included a careful analysis, which allows us to greatly reduce the technical burden of proving Theorem~\ref{thm:rpd=msr}.

\setcounter{lemma}{0}
\begin{lemma}
	A graph $G$ admits an RP realization with vectors in $\R^d$ if and only if it admits a weak RP realization with vectors in $\R^d$.
\end{lemma}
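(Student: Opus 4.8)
The two definitions differ only in the region from which the prices may be drawn: strong RP realization (Definition~\ref{def:RP-realization}) requires each $\bm p_i$ to lie in the closed nonnegative orthant, whereas weak RP realization (Definition~\ref{def:weak-RP-realization}) only requires $\ip{\bm p_i,\bm 1}>0$, i.e.\ membership in the open halfspace $H=\{\bm p:\ip{\bm p,\bm 1}>0\}$. The plan is to treat the two implications separately. The forward implication (strong $\Rightarrow$ weak) is nearly immediate, since the orthant minus the origin is contained in $H$; hence any strong realization in which no price vanishes is already a weak realization. The only prices needing attention are those equal to $\bm 0$, which force the corresponding vertex to have out-degree zero. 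I would handle such a vertex $i$ by relocating its bundle $\bm x_i$ within the polyhedron cut out by its incoming-edge constraints — thereby preserving those edges — to a position exposed by some strictly admissible price, and then assigning that price so the vertex retains out-degree zero. Note incoming edges to $i$ can only originate at nonzero-price vertices, which are left untouched.

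The substantive direction is weak $\Rightarrow$ strong, and here I would use the transformation the paper alludes to. The key observation is that applying an invertible linear map $A$ to every price and its inverse-transpose $A^{-\top}$ to every bundle preserves all inner products, since $\ip{A\bm p_i,A^{-\top}\bm x_j}=\ip{\bm p_i,\bm x_j}$, and hence preserves the entire preference graph. (One may additionally rescale each price by an independent positive factor.) It therefore suffices to exhibit an invertible $A$ carrying the prices into the nonnegative orthant. Given a weak realization, every price lies in $H$ and is thus nonzero, and their conic hull $C$ is pointed: any $\bm v\in C\cap(-C)$ would satisfy $\ip{\bm v,\bm 1}=0$, forcing $\bm v=\bm 0$.

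Because $C$ is pointed, its dual cone $C^{*}=\{\bm a:\ip{\bm a,\bm p_i}\geq 0\text{ for all }i\}$ is full-dimensional; indeed $\bm 1$ lies in its interior, as $\ip{\bm 1,\bm p_i}>0$ for every price. I would then take $d$ linearly independent vectors from the interior of $C^{*}$ as the rows of $A$, which is possible since this interior is open and full-dimensional. The resulting $A$ is invertible and satisfies $A\bm p_i>\bm 0$ for every $i$, so the transformed pairs $(A\bm p_i,\,A^{-\top}\bm x_j)$ constitute a strong RP realization of the same graph — in fact one with strictly positive prices, a normal form that should be convenient for the proof of Theorem~\ref{thm:rpd=msr}.

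The cleanest part of the argument is the algebra of this inner-product-preserving change of variables, together with the duality fact that a pointed cone has full-dimensional dual. I expect the main obstacle to lie not in the cone transformation but in the bookkeeping of degeneracies: verifying that the conic hull is genuinely pointed, selecting $d$ rows that are simultaneously interior to $C^{*}$ and linearly independent, and, above all, the out-degree-zero vertices in the forward direction. For those vertices no linear map can lift a zero price off the origin, so one is forced into the more delicate geometric argument of relocating the bundle while preserving every incoming edge and maintaining out-degree zero jointly across all such vertices. This degenerate case, rather than the transformation itself, is where I anticipate the technical weight of the ``careful analysis'' to fall.
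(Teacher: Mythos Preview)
Your reverse direction (weak $\Rightarrow$ strong) matches the paper's: both produce an invertible $A$ with $A\bm p_i\geq\bm 0$ and then apply the inner-product-preserving substitution $(\bm p,\bm x)\mapsto(A\bm p,\,A^{-\top}\bm x)$. The paper builds $A$ by hand---extending $\bm 1$ to a basis, writing each $\bm p_i$ in that basis, and shearing the basis until every coefficient is nonnegative---whereas you obtain $A$ by taking $d$ independent rows from the interior of the dual cone $C^{*}$. Your version is cleaner and more conceptual; the paper's is fully explicit and constructive. The ``careful analysis'' the paper advertises is precisely this explicit basis manipulation, not any treatment of degenerate prices.

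Where you diverge is the forward direction. The paper dismisses strong $\Rightarrow$ weak in a single line (``an RP realization is by definition a weak RP realization'') and spends no effort on it. Your observation that Definition~\ref{def:RP-realization} permits $\bm p_i=\bm 0$ while Definition~\ref{def:weak-RP-realization} requires $\ip{\bm p_i,\bm 1}>0$ is correct, so the paper's one-liner has a genuine small gap---but it is a hole in the paper, not a place where the paper intends work to occur. Your proposed repair (relocating $\bm x_i$ inside the polyhedron of in-constraints to a point exposed by some admissible price) is plausible but remains a sketch: you would need to include the \emph{non}-edge constraints $(j,i)\notin G$ as well as the edge constraints, argue that such an exposed point always exists, and handle several zero-price vertices simultaneously. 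So on this direction you are more careful than the paper, but you have misjudged where the paper's technical weight lies.
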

\begin{proof}
	An RP realization is by definition a weak RP realization, 
	so one direction of the implication follows trivially.
	It suffices to show that the existence of a weak RP realization implies the existence of an RP realization,
	which we will do constructively.
	Let $\vecset:=\{(\bm p_1,\bm x_1),\,\dotsc,\,(\bm p_n,\bm x_n)\}$ be the
	$d$-dimensional vectors which {\em weakly} realize~$G$.
	In other words, $(i,j)\in G$ if and only if $\ip{\bm p_i,\bm x_i}>\ip{\bm p_i,\bm x_j}$.
	Let $\bm b_1, \bm b_2,\dotsc, \bm b_{d-1}$ be a basis for the space orthogonal to the all-ones vector $\bm 1$.
	Thus, $\ip{\bm 1,\bm b_i}=0$ for all $i$, and $\mathcal B = \{\bm 1,\bm b_1,\bm b_2,\dotsc,\bm b_{d-1}\}$ is a basis for $\R^d$.

	We wish to express the $\bm p$ vectors as positive combinations of the $\bm b$ vectors, and thus restrict them to lie in a cone. 
	This will allow us to map the rays of the cone to the standard basis vectors, and get the desired ``strong'' RP realization.
	Since $\mathcal B$ is a basis, we can express $\bm p_i = \sum_{j=0}^{d-1} \alpha^i_j \bm b_j$ for all $i$ and $j$, 
	where $\bm b_0=\bm 1$.
	We have chosen the $\bm b_i$'s as orthogonal to $\bm 1$, and by assumption $\ip{\bm 1,\bm p_i}>0$ for all~$i$.
	Hence, $\alpha^i_0>0$ for all~$i$, and we define $\epsilon = \min\{\alpha^1_0,\dotsc,\alpha^n_0\}>0$.
	However, for all~$j\neq 0$, we may have $\alpha^i_j$  negative.
	To this end, define $\lambda_j = \min\{-1,\alpha^1_j,\alpha^2_j,\dotsc,\alpha^{n}_j\}$ for all $1\leq j\leq d-1$.
	Hence, $\alpha_j^i - \lambda_j \geq 0$ for all~$i$ and~$j$.
	We now define a slightly modified basis $\widehat {\mathcal B }= \{\bm b_1,\dotsc,\bm b_{d-1},\widehat{\bm b_d}\}$,
	where $\widehat{\bm b_d} = \epsilon\bm 1 + \sum_{j=1}^{d-1} \lambda_j \bm b_j$.
	In this new basis, we can express
	\begin{equation}
		\bm p_i = \tfrac{\alpha^i_0}{\epsilon} \cdot \widehat{\bm b_d} + \sum_{j=1}^{d-1} \bm b_j\cdot\left(
			\alpha^i_j - \tfrac{\alpha_0^i\lambda_j}{\epsilon}
		\right)
	\end{equation}
	Recall that $\lambda_j<0$ and $\alpha^i_0\geq \epsilon$, so $\tfrac{\alpha^i_0\lambda_j}{\epsilon}\leq \lambda_j$,
	and thus $\alpha^i_j - \tfrac{\alpha_0^i\lambda_j}{\epsilon}\geq \alpha^i_j-\lambda_j>0$, by construction.
	Thus, not only is $\widehat{\mathcal B}$ a basis for $\R^d$, but the $\bm p_i$ vectors are {\em non-negative} 
	combinations of the basis vectors.
	
	It remains, then to construct a linear map which goes between the standard basis and the basis $\widehat{\mathcal B}$.
	Let $B$ be the matrix whose columns are the vectors of $\widehat{\mathcal B}$, 
	and note that for the $j$-th standard basis vector $\bm e_j$, we have $B\bm e_j = \bm b_j$, for all $1\leq j\leq d$,
	setting $\bm b_d := \widehat{\bm b_d}$.
	Therefore, $B^{-1}\bm b_j = \bm e_j$ for all $1\leq j\leq d$.
	Since we have shown that the $\bm p$ vectors are non-negative combinations of the $\widehat{\mathcal B}$ vectors,
	we may conclude that for all $i$, $B^{-1}\bm p_i$ has all non-negative entries.
	Furthermore, 
	\begin{equation}
	\ip{B^{-1}\bm p_i,B^{\mathsf T}\bm x_j} 
	= (B^{-1}\bm p_i)^{\mathsf{T}}B^{\mathsf{T}}\bm x_j
	= \bm p_i^{\mathsf{T}} (B^{-1})^{\mathsf{T}}B^{\mathsf{T}} \bm x_j 	
	= \ip{\bm p_i,\bm x_j}
	\end{equation}
	Therefore, setting $\widehat{\bm p_i} = B^{-1}\bm p_i$, and $\widehat{\bm x_j} = B^{\mathsf{T}}\bm x_j$, 
	we have that $\ip{\widehat{\bm p_i},\widehat{\bm x_j}} > 0 $ if and only if $\ip{\bm p_i,\bm x_j}>0$, so 
	$\widehat{\vecset}:=\{(\widehat{\bm p_1},\widehat{\bm x_1}),\,\dotsc,\,(\widehat{\bm p_n},\widehat{\bm x_n})\}$ 
	are $d$-dimensional vectors which {\em strongly} realize~$G$, as desired.\qed
\end{proof}

To complete the proof of the theorem, 
it remains to construct low-rank sign matrices for preference graphs which have low-dimensional RP realizations,
and construct low-dimensional {\em weak} RP realizations when the {\em augmented} directed graph has a sign-incidence matrix 
with low sign rank.
We begin by recalling the definition of the augmented preference graph: 
For any directed graph $G$, let $G^+$ be constructed by appending two nodes $s$ and $t$ to $G$, 
adding the directed edge $(s,t)$,
and for all $v\in G$, adding the directed edges $(s,v)$ and $(v,t)$. 
We begin by observing that the addition of these two extra nodes does not affect the RP~dimension of the graph:
\begin{claim} $\rpd(G) = \rpd(G^+)$.
\end{claim}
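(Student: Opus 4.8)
The plan is to establish the two inequalities $\rpd(G)\le\rpd(G^+)$ and $\rpd(G^+)\le\rpd(G)$ separately. The first is immediate by restriction: if $\vecset=\{(\bm p_i,\bm x_i)\}$ RP-realizes $G^+$ in $\R^d$, then discarding the pairs associated with $s$ and $t$ leaves a realization of $G$. Indeed, $G^+$ restricted to $V(G)$ is exactly $G$, since the only new edges are those incident to $s$ or $t$, and the defining inner-product condition among the surviving vertices is unchanged while the sign constraints $\bm p_i\ge\bm 0$ are preserved. Hence $\rpd(G)\le\rpd(G^+)$.

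For the reverse inequality I would extend a realization of $G$ by supplying price/bundle pairs for the two new vertices. It is cleanest to work with \emph{weak} realizations throughout: by Lemma~\ref{lem:weak=strong} the least dimension admitting a weak realization coincides with $\rpd(G)$, so I may start from a weak realization $\{(\bm p_v,\bm x_v)\}_{v\in V(G)}$ in $\R^d$ with $d=\rpd(G)$, where crucially $\ip{\bm p_v,\bm 1}>0$ for every $v$. I would then set $\bm p_s=\bm p_t=\bm 1$ and $\bm x_s=c\bm 1$, $\bm x_t=-c\bm 1$ for a large constant $c>0$, and verify that for $c$ sufficiently large these pairs reproduce exactly the edge set of $G^+$.

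The verification splits into the edges incident to $s$ and $t$. Writing $a_v=\ip{\bm p_v,\bm 1}>0$, the required edge $(v,t)\in G^+$ reads $\ip{\bm p_v,\bm x_v}>\ip{\bm p_v,\bm x_t}=-c\,a_v$, which holds once $c$ is large since $a_v>0$; symmetrically, the required non-edge $(v,s)\notin G^+$ reads $\ip{\bm p_v,\bm x_v}\le\ip{\bm p_v,\bm x_s}=c\,a_v$, again satisfied for large $c$. The conditions $(s,v)\in G^+$ and $(t,v)\notin G^+$ reduce to $cd>\ip{\bm 1,\bm x_v}$ and $-cd\le\ip{\bm 1,\bm x_v}$, which hold for large $c$, and the edge $(s,t)$ together with the pairwise comparisons between $s$ and $t$ follow from $cd>-cd$. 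Since $V(G)$ is finite, a single threshold for $c$ works for all vertices simultaneously. This produces a weak realization of $G^+$ in $\R^d$, and applying Lemma~\ref{lem:weak=strong} once more converts it to a (strong) RP realization in $\R^d$, giving $\rpd(G^+)\le d=\rpd(G)$.

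The only real subtlety, and the reason for routing everything through weak realizations, is that the dominated vertex $t$ forces every $v\in V(G)$ to satisfy $\ip{\bm p_v,\bm x_v}>\ip{\bm p_v,\bm x_t}$, which is impossible if some price vector $\bm p_v$ is identically zero. The weak-realization hypothesis $\ip{\bm p_v,\bm 1}>0$ rules this out and guarantees that the terms $\pm c\,a_v$ genuinely diverge, so the main work is merely book-keeping over the finitely many edge and non-edge constraints rather than any delicate geometry.
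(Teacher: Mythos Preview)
Your proof is correct and follows essentially the same route as the paper: restrict a realization of $G^+$ to $V(G)$ for one inequality, and for the other append a dominating bundle $\bm x_s$ and a dominated bundle $\bm x_t$ together with $\bm p_s=\bm p_t=\bm 1$. The paper works directly with strong realizations, choosing $\bm x_s$ to coordinate-wise dominate and $\bm x_t$ to be coordinate-wise dominated by every $\bm x_v$, and then appeals to $\bm p_v\ge\bm 0$ for the required inequalities. Your detour through weak realizations (invoking Lemma~\ref{lem:weak=strong} on each end) is a legitimate refinement rather than a different idea: as you correctly point out, the direct argument tacitly needs each $\bm p_v$ to be nonzero in order for the edge $(v,t)$ to appear, and the weak hypothesis $\ip{\bm p_v,\bm 1}>0$ secures this cleanly without any side perturbation.
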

\begin{proof}
	Let $d := \rpd(G)$ and $d' := \rpd(G^+)$.
	Clearly, $\rpd(G)\leq d'$, since it suffices to remove the vectors representing the $s$ and $t$ nodes from any realization of $G^+$ in $d'$ dimensions.
	It remains to show $\rpd(G^+)\leq d$.  
	We say a vector $\bm x=(x_1,\dotsc,x_d)$ {\em dominates} $\bm y=(y_1,\dotsc,y_d)$ if $x_i\geq y_i$ for all $1\leq i\leq d$.
	Now, let $\vecset:=\{(\bm p_1,\bm x_1),\,\dotsc,\,(\bm p_n,\bm x_n)\}$ be the
	$d$-dimensional vectors which realize $G$.
	Assume that the realization is a standard realization, as in definition~\ref{def:RP-realization}, as opposed to a weak one.
	Then we must have that if $\bm x_i$ dominates $\bm x_j$, there is an $(i,j)$ edge in $G$.
	
	Now, the collection $\bm x_1,\dotsc,\bm x_n$ is finite, and so there must exist vectors $\bm x_s$ and $\bm x_t$ such that
	$\bm x_s$ dominates $\bm x_i$ and $\bm x_i$ dominates $\bm x_t$ for all $1\leq i\leq n$.
	It suffices to set $\bm p_s=\bm p_t=\bm 1$, and this gives a $d$-dimensional realization of $G^+$, as desired.\qed
\end{proof}

Now that we have shown that $G$ and $G^+$ have the same sign-rank, we may introduce the signed adjacency matrix.
For any graph $G$ on the vertex set $\nset s$, let $M(G)$ be defined as 
\begin{equation}
	M(G)_{ij} = \begin{cases}
		0&\text{ if }i = j\\
			+ &\text{ if }(i,j)\in G\\
			- &\text{ if }(i,j)\notin G
	\end{cases}
\end{equation}
In what follows, we show that if $\rpd(G^+)=d$, then \mbox{$\msr(M(G^+))\leq d+1$},
and if $\msr(M(G^+))=r$, then there is a weak RP realization for $G^+$ in $r-1$ dimensions.
Both of those directions will be shown constructively, following a similar construction.
We begin by showing the first direction:
\begin{lemma}\label{lem:rpd-at-least-msr}
	$\msr(M(G^+))\leq \rpd(G^+)+1$.
\end{lemma}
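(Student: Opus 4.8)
The plan is to read a sign realization directly off an optimal RP realization. Write $d=\rpd(G^+)$ and let the pairs $(\bm p_i,\bm x_i)\in\R^d\times\R^d$, indexed by the vertices of $G^+$, be an RP realization of $G^+$. The candidate matrix is
\begin{equation}
 A_{ij}:=\ip{\bm p_i,\bm x_i}-\ip{\bm p_i,\bm x_j}=\ip{\bm p_i,\bm x_i-\bm x_j}.
\end{equation}
First I would establish the rank bound. Collecting the self-values into a vector $\bm c$ with $c_i=\ip{\bm p_i,\bm x_i}$, and letting $P$ and $X$ be the matrices whose rows are the $\bm p_i$ and the $\bm x_j$, we may write $A=\bm c\,\bm 1^{\mathsf T}-PX^{\mathsf T}$. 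The first term has rank $1$ and the second has rank at most $d$, so $\operatorname{rank}(A)\le d+1$, which is exactly the bound the lemma asks for.

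Second I would match signs against $M(G^+)$. By Definition~\ref{def:RP-realization}, $(i,j)\in G^+$ iff $\ip{\bm p_i,\bm x_i}>\ip{\bm p_i,\bm x_j}$, i.e.\ iff $A_{ij}>0$, which handles the $+$ entries, while $A_{ii}=0$ handles the diagonal. The whole difficulty is concentrated in the $-$ entries: for a non-edge $(i,j)$ with $i\ne j$ the realization only guarantees $\ip{\bm p_i,\bm x_i}\le\ip{\bm p_i,\bm x_j}$, hence $A_{ij}\le 0$, whereas a matrix sign realization of $M(G^+)$ (Definition~\ref{def:MSR-realization}) requires the strict inequality $A_{ij}<0$. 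The offending cases are exactly the \emph{ties}, where a non-edge is realized with equality.

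I expect this tie-breaking to be the main obstacle, and the plan is to remove it by showing that the realization may be chosen with no ties at all. Since every matrix of the form $\ip{\bm p_i,\bm x'_i-\bm x'_j}$ has zero diagonal and rank at most $d+1$ irrespective of the bundles, I am free to replace the $\bm x_j$ while holding the (nonnegative) prices fixed; the requirement ``$>0$ on edges, $<0$ on non-edges'' then becomes a system of strict linear inequalities in the coordinates of the $\bm x'_j$, for which the original bundles already solve the non-strict relaxation. I would upgrade this to strict feasibility via a theorem of the alternative: an infeasible strict system would yield a nonnegative dual combination which, since the edge inequalities are already strict, must be supported on the non-edges and would force those entries to vanish identically in $\bm x'$. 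Ruling out such a degenerate certificate is where I expect the real work to lie; the plan is to first perturb the whole configuration into sufficiently general position---preserving the open strict relations and ensuring no row has an identically-zero price, so that sink-type vertices such as $t$ can be made strict minima---after which the certificate cannot occur and the strict system is feasible. Feeding the resulting tie-free realization into the matrix $A$ then produces a rank-$(d+1)$ matrix sign realization of $M(G^+)$, giving $\msr(M(G^+))\le\rpd(G^+)+1$.
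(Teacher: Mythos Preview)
Your construction is exactly the paper's: the matrix $A_{ij}=\ip{\bm p_i,\bm x_i-\bm x_j}$ and the rank bound $\operatorname{rank}(A)\le d+1$ via the factorization $A=\bm c\,\bm 1^{\mathsf T}-PX^{\mathsf T}$ are precisely what the paper writes (its equation~\eqref{eq:A(X)} is the same product, with the $(d{+}1)$st column of the left factor carrying $c_i=\ip{\bm p_i,\bm x_i}$). So on the core idea there is no daylight between your proposal and the paper.

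Where you diverge is in the handling of ties on non-edges. You are right that Definition~\ref{def:RP-realization} only gives $A_{ij}\le 0$ for $(i,j)\notin G^+$, $i\ne j$, whereas a sign realization of $M(G^+)$ needs $A_{ij}<0$. The paper does not argue this inside the proof at all: it simply asserts that ``the entries of $M(G)$ are exactly the signs of the entries of $A(\vecset)$,'' relying on the remark immediately after Definition~\ref{def:RP-realization} that strict inequality is adopted ``without loss of generality'' precisely so that such equalities may be assumed absent. In other words, the paper absorbs your entire tie-breaking discussion into a one-line genericity disclaimer made earlier.

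Your proposed route via a theorem of the alternative is sound in spirit---fixing the prices, the constraints are homogeneous linear in the bundles, and Gordan's theorem reduces infeasibility of the strict system to a nonnegative dual certificate which, evaluated at the given realization, must be supported on the ties; a generic perturbation of the prices then kills the resulting linear dependence---but it is heavier than what the situation calls for, and you yourself flag that ruling out the degenerate certificate is where the unfinished work sits. If you want to make this airtight rather than match the paper's handwave, a lighter argument is to perturb the full configuration $(\bm p_i,\bm x_j)$: the edge constraints and the already-strict non-edge constraints are open, and each tie $A_{ij}=0$ is a nontrivial polynomial in the configuration whose gradient (e.g.\ in the $\bm p_i$ direction along $\bm x_i-\bm x_j$, after first nudging the $\bm x$'s to be distinct) is nonzero, so one can step off all tie hypersurfaces to the negative side while staying in the open region. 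Either way, the substance of the lemma is the factorization, and on that your proof and the paper's coincide.
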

\begin{proof}
	Let $\vecset:=\{(\bm p_1,\bm x_1),\,\dotsc,\,(\bm p_n,\bm x_n)\}$ be the
	$d$-dimensional vectors which realize~$G^+$, that is for all $i,j\in V(G)$, 
	$(i,j)\in E(G)$ if and only if $\ip{\bm p_i,\bm x_i}>\ip{\bm p_i,\bm x_j}$.
	We construct the following matrix:
	let $A(\vecset)$ be the $n\times n$-dimensional matrix whose entries are given by 
	$A(\vecset)_{ij} = \ip{\bm p_i,\bm x_i-\bm x_j}$ for all $i,j\leq n$.
	Observe that we have chosen the entries of $M(G)$ to be exactly the signs of the entries of $A(\vecset)$.
	Thus, it suffices to show that $A(\vecset)$ has rank at most $d+1$, which will imply that $M(G)$ has sign-rank at most $d+1$.
	Indeed, 
	\begin{equation}
		A(\vecset) = \underbrace{\left[\begin{array}{ccc|c}
			\leftarrow&\bm p_1&\rightarrow&\ip{\bm p_1,\bm x_1}\\
			\leftarrow&\bm p_2&\rightarrow&\ip{\bm p_2,\bm x_2}\\
			&\vdots&&\vdots\\
			\leftarrow&\bm p_n&\rightarrow&\ip{\bm p_n,\bm x_n}
		\end{array}\right]}_{n\times(d+1)}\cdot
		\underbrace{\left[\begin{array}{cccc}
			\uparrow&\uparrow&&\uparrow\\
			-\bm x_1&-\bm x_2&\dotsm&-\bm x_n\\
			\downarrow&\downarrow&&\downarrow\\[3pt]\hline
			1&1&\dotsm&1
		\end{array}\right]}_{(d+1)\times n} \label{eq:A(X)}
	\end{equation}
	Since the inner-dimension of the product is $d+1$, this implies that $A(\vecset)$ has rank at most $d+1$, 
	as desired. \qed
\end{proof}

We will use this same construction to show the converse.
The extension $G^+$ is required to ensure that this is possible, and that the vectors do indeed form a weak RP realization.

\begin{lemma}\label{lem:rpd-at-most-msr}
$\rpd(G^+) \leq \msr(M(G^+)) -1$.
\end{lemma}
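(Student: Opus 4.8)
The plan is to reverse the factorization used in Lemma~\ref{lem:rpd-at-least-msr}. Let $r=\msr(M(G^+))$ and fix a rank-$r$ sign realization $A$ of $M(G^+)$. Writing $A=UV^{\mathsf T}$ with rows $\bm u_i,\bm v_j\in\R^r$ gives $A_{ij}=\ip{\bm u_i,\bm v_j}$, and since the diagonal of $M(G^+)$ is $0$ we get $\ip{\bm u_i,\bm v_i}=0$. The first observation is that the choice $\bm p_i=\bm u_i$, $\bm x_j=-\bm v_j$ already realizes the signs, because $\ip{\bm p_i,\bm x_i-\bm x_j}=\ip{\bm u_i,\bm v_j}-\ip{\bm u_i,\bm v_i}=A_{ij}$, whose sign is $M(G^+)_{ij}$. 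This lives in $\R^r$, so the entire task is to shave off one dimension and to meet the weak-realization constraint.

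The dimension is saved by exploiting the two special nodes. Since row $s$ of $A$ is positive, row $t$ is negative, and the diagonal is zero, the functional $\bm z:=\bm u_s-\bm u_t$ satisfies $\ip{\bm z,\bm v_j}=A_{sj}-A_{tj}>0$ for every $j$ (the cases $j\notin\{s,t\}$, $j=s$, $j=t$ all check out). Rescaling each column by the positive factor $1/\ip{\bm z,\bm v_j}$ preserves every sign and the rank, keeps the diagonal zero, and places the new columns $\bm v_j'$ of the rescaled matrix $A'$ on the affine hyperplane $\{\bm y:\ip{\bm z,\bm y}=1\}$. Setting $\bm x_j=-\bm v_j'$ and translating all the $\bm x$'s onto the linear hyperplane $H:=\bm z^{\perp}$ (translation leaves the differences $\bm x_i-\bm x_j$, and hence the realization, untouched) puts every difference inside the $(r-1)$-dimensional space $H$; replacing each $\bm p_i=\bm u_i$ by its orthogonal projection onto $H$ then changes no value $\ip{\bm p_i,\bm x_i-\bm x_j}$. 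This produces a realization entirely inside $H\cong\R^{r-1}$.

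The step I expect to be the crux is the weak constraint $\ip{\bm p_i,\bm 1}>0$, which the projected price vectors need not satisfy on their own. What rescues it is that the same rescaling also tames the dual side: because $\ip{\bm z,\bm v_s'}=\ip{\bm z,\bm v_t'}=1$, the vector $\bm w:=\bm v_t'-\bm v_s'$ lies in $H$, while the sign pattern of columns $s$ and $t$ gives $\ip{\bm u_i,\bm w}=A'_{it}-A'_{is}>0$ for every $i$ (again checking $i\notin\{s,t\}$, $i=s$, $i=t$). Hence all the projected price vectors lie in the open halfspace of $H$ with normal $\bm w$. A linear change of coordinates on $H$ carrying $\bm w$ to a positive multiple of $\bm 1$, with the $\bm x$'s transformed by the inverse transpose so that every inner product is preserved, then delivers exactly $\ip{\bm p_i,\bm 1}>0$, i.e.\ a weak RP realization of $G^+$ in $\R^{r-1}$.

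Finally, Lemma~\ref{lem:weak=strong} upgrades this to a strong RP realization in the same dimension, so $\rpd(G^+)\le r-1=\msr(M(G^+))-1$. The only genuinely delicate point is the simultaneous bookkeeping of the primal normalization (putting the $\bm v_j'$ on an affine hyperplane to drop a dimension) and the dual positivity (keeping a common witness $\bm w$ inside that hyperplane); the coincidence that a single rescaling by $\bm z$ accomplishes both, thanks to the $s$/$t$ structure, is what makes the argument close.
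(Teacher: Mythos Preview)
Your proof is correct and follows essentially the same approach as the paper: both use the all-positive row difference $A_s-A_t$ to rescale columns and drop a dimension, then use the column difference associated to $s$ and $t$ to secure the weak positivity condition via a linear change of coordinates. The only distinction is presentational---you phrase the dimension drop as projection onto the hyperplane $\bm z^\perp$, whereas the paper picks an explicit basis for the row space with the rescaled all-ones vector as a distinguished member and reads off the $\bm p_i,\bm x_j$ from the resulting $LR$ factorization \`a la equation~\eqref{eq:A(X)}.
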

\begin{proof}
Let $A$ be some rank-$r$ realization of $M(G^+)$.
Assume without loss of generality that the first and second rows and columns of $M(G^+)$ are associated to the dominating and dominated vertices, respectively. Thus, $M(G^+)$ has the form
\[
	\left[\begin{matrix}
		0 & + & + & + & + & \dotsm\\
		- & 0 & - & - & - & \dotsm\\
		- & + & 0 & * & * & \dotsm\\
		- & + & * & 0 & * & \dotsm\\
		\vdots & \vdots& \vdots& \vdots & \ddots & \ddots
	\end{matrix}\right]
\]
Thus, letting $A_i$ be the $i$-th row of $A$, we have that $A_1-A_2$ is an all-positive vector.
Since $A$ has rank $r$, we may set $\bm a_1=A_1-A_2$, and extend it to a basis $\mathcal A = \{\bm a_1,\bm a_2,\dotsc,\bm a_r\}$
for the rows of $A$.
Let $R$ be the $r\times n$ matrix whose rows are the vectors of $\mathcal A$, and let $L$ be the matrix of coefficients such that 
$A=LR$.
Note that scaling the columns of $A$ is the same as scaling the columns of $R$, 
and this scaling process does not affect the rank of the matrix.
Furthermore, if the scaling factors are positive, then the sign pattern is unaffected.
Thus, we may assume without loss of generality that $\bm a_1 = \bm 1$, by rescaling column $j$ by $1/(\bm a_1[j])>0$,
for all $j$.
(We are using square brackets to denote the entries of the vector.)
Thus, we may interpret the entries of $L$ and $R$ as in equation~\eqref{eq:A(X)}. 
Now, if the $i$-th row of $L$ is given by the vector $\bm \ell_i$,  we set $\bm p_i := \bm \ell_i[1..r-1]$,
and assume $\ip{\bm p_i,\bm x_i} = \bm \ell_i[r]$.
Furthermore, if the $j$-th column of $R$ is given by $(\bm r_j,1)$, then we set $\bm x_j = -\bm r_j$.
Since the diagonal entries of $M(G^+)$ are zero, we must have that $1\cdot \ip{\bm p_i,\bm x_i} + \ip{\bm p_i,-\bm x_i}=0$,
which is consistent.

Thus, we have vectors $\bm p_i$ and $\bm x_j$ in $\R^{r-1}$ such that $\ip{\bm p_i,\bm x_i}>\ip{\bm p_i,\bm x_j}$ if and only if $(i,j)\in G^+$.
It suffices to transform these vectors to ensure $\ip{\bm 1,\bm p_i}>0$ for all $i$.
Recall that we have assumed that $G^+$ contains the edges $(1,i)$ and $(i,2)$ for all $3\leq i\leq n$.
Thus, we must have $\ip{\bm p_i,\bm x_i}\leq \ip{\bm p_i,\bm x_1}$ 
and $\ip{\bm p_i,\bm x_i}> \ip{\bm p_i,\bm x_2}$.
Therefore, $\ip{\bm p_i,\bm x_1-\bm x_2}>0$ for all $i\geq 3$.
Similarly to the proof of Lemma~\ref{lem:weak=strong}, we will use this to find an appropriate linear transformation for the $\bm p$ and $\bm x$ vectors.
Let $Q$ be any invertible matrix such that $Q(\bm x_1-\bm x_2) = \bm 1$.
Then, setting $\widehat{\bm p_i} := (Q^{-1})^{\mathsf T} \bm p_i$, and $\widehat{\bm x_j} := Q\bm x_j$,
we have $\ip{\bm p_i,\bm x_j} = \ip{\widehat{\bm p_i},\widehat{\bm x_j}}$, and $0<\ip{\widehat{\bm p_i},\widehat{\bm x_1}-\widehat{\bm x_2}}=\ip{\widehat{\bm p_i},\bm 1}$.
Thus, we have constructed a weak RP realization for $G^+$ in $r-1$ dimensions.\qed
\end{proof}

We claim that this completes the proof of Theorem~\ref{thm:rpd=msr}: 
Lemma~\ref{lem:weak=strong} ensures that a weak realization is possible if and only if a ``strong'' one is,
the above Claim ensures that $\rpd(G^+)=\rpd(G)$,
these two facts along with Lemma~\ref{lem:rpd-at-least-msr} imply that $\rpd(G)\leq \msr(M(G^+))-1$,
and finally, Lemma~\ref{lem:rpd-at-most-msr} implies that $\rpd(G)\geq \msr(M(G^+))-1$,
from which we conclude Theorem~\ref{thm:rpd=msr}.

\section{RP Dimension and Order Dimension -- Proofs of Theorems~\ref{thm:rpd-leq-dim} and~\ref{thm:rpd-bad-examples}}
\label{sec:poset-proofs}
In this section, we prove Theorems~\ref{thm:rpd-leq-dim} and~\ref{thm:rpd-bad-examples}, as was outlined in Section~\ref{sec:results}.
We begin by defining the notion of a dominance order, and noting the natural interpretation of dimension as order dimension.
Recall from section~\ref{sec:thm-msr-proof} the notion of vector dominance: where we say $\bm x$ dominates $\bm y$ 
if it is at least as great in each coordinate. 
This is denoted $\bm x\geq \bm y$.
A standard form of geometrically-defined partial orders is a vector-dominance partial order: 
Given a set of vectors $\bm x_1,\,\dotsc,\,\bm x_n$, we set $i\succ j$ if and only if $\bm x_i\geq \bm x_j$.
It is easy to check that this relation is transitive and acyclic.
We remark that the vector-dominance poset induced by points in $\R^d$ has order dimension at most $d$:
For all $1\leq j\leq d$, set the $j$-th total order to be the ordering of the $n$ points with respect to their $j$-th coordinate.
Then $i\succ j$ if and only if all $d$ total orders agree on the relative ordering of $\bm x_i$ and $\bm x_j$.
The converse also holds: if a partial order has order-dimension $k$,
then it can be expressed as a vector-dominance poset in $\R^k$. 
We will formalize this fact and extend it to show Theorem~\ref{thm:rpd-leq-dim}.

\begin{proof}[of Theorem~\ref{thm:rpd-leq-dim}]
	Let $(S,\succ)$ be a poset with order-dimension $k$, a
	and let $G=(S,\succ)$ be the associated directed graph.
	We wish to show $\rpd(G)\leq k$.

Let $\succ_1,\,\dotsc,\,\succ_k$ be the $k$ total orders which realize $\succ$.
	Note that each total order $\succ_i$ induces a ranking $\sigma_i$ on the elements of 
	$S$, such that $a\succ_i b$ if and only if $\sigma_i(a)>\sigma_i(b)$.
	Assume without loss of generality that $\sigma_i$ maps the elements of $S$ to $\nset{|S|}$.
	Then the usual dominance embedding of $(S,\succ)$ is given by mapping each $a\in S$
	to $\phi(a) := \bm \sigma(a) := (\sigma_1(a),\sigma_2(a),\dotsc,\sigma_k(a))$.
	Now, $a\succ b$ if and only if $\phi(a)$ dominates $\phi(b)$.
	
	For the purposes of RP-dimension, we need a rescaled embedding.
	If $k=1$, then $(S,\succ)$ is a total order, and setting $\bm x_i=\sigma(i)$,
	$\bm p_i=1$ will suffice.
	Otherwise, define $\psi(a):= (k^{\sigma_1(a)},k^{\sigma_2(a)},\dotsc,k^{\sigma_k(a)})$.
	Since $k\geq 2$, this maintains the dominance ordering of $\phi$.
	Now, if we set $\bm x_i=\psi(i)$, and 
	\[
		\bm p_i = \left(
			\tfrac1{k^{\sigma_1(i)}},\tfrac1{k^{\sigma_2(i)}},\dotsc,\tfrac1{k^{\sigma_k(i)}}
		\right)
	\]
	then we get $\ip{\bm p_i,\bm x_i}=k$. 
	Furthermore, letting $\bm y^i_j = (\dotsc,0,k^{\sigma_j(i)+1},0,\dotsc)$, 
	we have $\ip{\bm p_i,\bm y^i_j} = k$ for all $1\leq j\leq k$.
	Therefore, the hyperplane normal to $\bm p_i$, passing through $\bm x_i$,
	will also pass through $\bm y^i_j$ for all $j$.
	Since the $\sigma$ values are presumed to be positive integers, this means that 
	$\bm x_i=\psi(i)$ can only be revealed-preferred to $\bm x_j=\psi(j)$ if $\bm x_i$
	dominates $\bm x_j$.
	
	Thus, we have constructed a set of vectors in $\R^k$ which realizes $G=(S,\succ)$,
	which allows us to conclude that $\rpd(G)\leq k$. \qed
\end{proof}

It remains to prove Theorem~\ref{thm:rpd-bad-examples}, that is that for $k=1$, $2$, or $3$,
posets of order dimension $k$ have RP dimension $k$, 
but that for all $k\geq 3$, there exists a poset of order dimension $k$ but order dimension $3$.
We begin by showing this first part: 
\begin{lemma}
	A poset has RP dimension 1 or 2 if and only if it has order dimension 1 or 2, respectively.
\end{lemma}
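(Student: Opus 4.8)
The plan is to handle the two equivalences together, taking $\rpd(G)\le k$ for free from Theorem~\ref{thm:rpd-leq-dim} and arguing the reverse inequality by hand. Throughout I would use the realization supplied by Theorem~\ref{thm:rpd=msr}, in which non-edges are realized \emph{strictly}, so that an incomparable pair $i\parallel j$ satisfies both $\ip{\bm p_i,\bm x_i}<\ip{\bm p_i,\bm x_j}$ and $\ip{\bm p_j,\bm x_j}<\ip{\bm p_j,\bm x_i}$. For dimension one: a poset has order dimension $1$ iff it is a total order, and then Theorem~\ref{thm:rpd-leq-dim} gives $\rpd(G)\le1$, hence $\rpd(G)=1$. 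Conversely, in a $1$-dimensional realization a zero scalar price makes $\ip{\bm p_i,\cdot}$ vanish identically and so cannot produce the strict inequality demanded by any off-diagonal entry; hence every $\bm p_i>0$, so $i\succ j\iff\bm x_i>\bm x_j$, and an incomparable pair would force $\bm x_i<\bm x_j$ and $\bm x_j<\bm x_i$ at once. Thus $1$-dimensional realizations are exactly total orders, giving the first equivalence.

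For the second equivalence the easy inclusions are immediate from the first. Order dimension $2$ gives $\rpd(G)\le2$ by Theorem~\ref{thm:rpd-leq-dim}, while $\rpd(G)\neq1$ because, not being a total order, the poset is not realizable in one dimension by the previous paragraph; hence $\rpd(G)=2$. Symmetrically, $\rpd(G)=2$ forces order dimension $\ge2$, since order dimension $1$ would yield $\rpd(G)=1$. The one genuinely hard inclusion that remains is that \emph{a poset with $\rpd(G)\le2$ has order dimension at most $2$}.

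For this I would read the pairwise structure off the planar realization. Writing $\bm v_{ij}=\bm x_i-\bm x_j$, strictness gives $i\succ j\iff\ip{\bm p_i,\bm v_{ij}}>0$, and in that case $\ip{\bm p_j,\bm v_{ij}}>0$ as well; an incomparable pair is exactly one on which the two prices \emph{disagree}, with $\ip{\bm p_i,\bm v_{ij}}$ and $\ip{\bm p_j,\bm v_{ij}}$ of opposite sign. As every $\bm p_i$ lies in the first-quadrant cone it is a direction $\theta_i\in(0,\tfrac{\pi}{2})$, and two prices agree on $\bm v_{ij}$ precisely when $\theta_i,\theta_j$ lie on the same side of the direction orthogonal to $\bm v_{ij}$. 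The plan is to turn this into a realizer of size $2$: by Definition~\ref{def:order-dimension} it suffices to produce linear extensions $L_1,L_2$ of $\succ$ with $L_1\cap L_2={\succ}$, equivalently to split each incomparable pair consistently between $L_1$ and $L_2$. I would orient each incomparable pair by the first coordinate of the $\bm x$'s (equivalently, by the angle at which the one-parameter sweep order $\ip{(\cos\theta,\sin\theta),\bm x_\bullet}$ first transposes $i$ and $j$), and then argue that the two resulting orders are transitive and intersect back to $\succ$.

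The main obstacle is exactly this last step. The RP order is \emph{not} the coordinate-dominance order of the points $\bm x_i$: a preferred point need not dominate, and neither coordinate projection is even a linear extension of $\succ$, so the two realizing orders cannot be taken to be the coordinate orders and must instead be synthesized from the joint behaviour of the points and their price directions. Showing that the proposed orientation of the incomparabilities is transitive---equivalently, that a planar RP configuration cannot reproduce the forbidden subposets witnessing dimension at least $3$---is where the real work lies. This inclusion is precisely the lower bound ``order dimension $\ge3\Rightarrow\rpd(G)\ge3$'' in the guise needed to exclude $\rpd(G)=2$, and I expect it to be the technical heart of the argument.
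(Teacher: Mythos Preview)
Your plan is essentially the paper's: dimension~1 is handled identically, and for dimension~2 the paper, like you, orients every incomparable pair by the first coordinate of the $\bm x$'s and reduces the claim to showing this orientation is transitive. (The paper phrases the target as ``the complement of the comparability graph is a comparability graph'' rather than ``there is a realizer of size~2'', but these are the same thing, and the resulting task is identical to yours.)

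The gap is that you stop exactly where the content is. You correctly flag transitivity of the induced orientation as ``the technical heart'', but you do not prove it, and nothing in your setup (the angular picture of prices, the sweep interpretation) yet does the work. The paper dispatches this with a short bare-hands computation: after normalizing $p_i=1$ (the first price coordinate), if $1,2,3$ are pairwise incomparable with $x_1>x_2>x_3$, then the two incomparability inequalities at each pair force $y_1<y_2<y_3$ and, by subtracting the crossed inequalities, $q_1>q_2>q_3$; combining $q_1>q_2$ with $y_2<y_3$ turns $x_3+q_2y_3\ge x_2+q_2y_2$ into $x_3+q_1y_3\ge x_2+q_1y_2\ge x_1+q_1y_1$, i.e.\ $\ip{\bm p_1,\bm x_3}\ge\ip{\bm p_1,\bm x_1}$, and symmetrically for $\bm p_3$. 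That is the missing step; once you have it, your argument and the paper's coincide.
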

\begin{proof}
	Note that an RP realization in $\R^1$ is simply a total order on the players, since we have 
	$(i,j)\in G$ if and only if $p_ix_i> p_ix_j$. 
	Since we need $p_i > 0$, we have $(i,j)\in G$ if and only if $x_i>x_j$, and therefore,
	the values of $x_1,\,\dotsc,\,x_n$ induce a total order on the elements.
	Thus, any graph has RP dimension 1 if and only if it is a poset of order dimension 1, 
	namely, a total order.
	
	Thus, we conclude that a poset of order dimension 2 must have RP dimension equal to 2.
	It remains to show that a poset with RP dimension 2 must have order dimension 2.
	It is known that a poset has order dimension 2 if and only if the complement of its {\em comparability graph} is also a comparability graph~\cite{posetOrder2}.
	In our terms, a poset $(S,\succ)$ has order dimension 2 if and only if (a) at least one pair of elements is not comparable,
	{\em i.e.} there is some $x,y\in S$ such that neither $x\succ y$ nor $y\succ x$,
	and (b) there exists a partial order $\succ'$ on $S$ whose comparable pairs are exactly those which are non-comparable in $(S,\succ)$.
	Thus, for any two $x,y\in S$, we must have exactly one of $x\prec y$, $x\succ y$, $x\prec' y$, and $x\succ' y$ hold.
	Therefore, to show that a partial order with RP dimension 2 must have order dimension 2, 
	we must construct a partial order on its non-comparable pairs.
	
	Let $\vecset:=\{(\bm p_1,\bm x_1),\,\dotsc,\,(\bm p_n,\bm x_n)\}$ be the
	$2$-dimensional vectors which realize~$(S,\succ)$, that is for all $i,j\in S$, 
	$i\succ j$ if and only if $\ip{\bm p_i,\bm x_i}>\ip{\bm p_i,\bm x_j}$.
	Furthermore, denote $\bm x_i = (x_i,y_i)$ and $\bm p_i = (p_i,q_i)$ for all $1\leq i\leq n$.
	Recall also that we have assumed $\bm p^i_j\geq 0$ for all $1\leq i\leq n$ and $j=1,2$.
	For every pair $i,j$ such that neither $i\succ j$ nor $j\succ i$, 
	we say $i\succ' j$ if $x_i>x_j$, and $j\succ' i$ otherwise.
	This is clearly acyclic, it remains to show that the relation is transitive.
	
	Let $\bm x_1,\bm p_1,\bm x_2,\bm p_2,\bm x_3,\bm p_3$ be such that 
	$\ip{\bm p_i,\bm x_j}\geq \ip{\bm p_i,\bm x_i}$ for all $j\neq i$,
	and $x_1>x_2>x_3$. 
	This implies that $1\succ' 2\succ' 3$. We wish to show that $1\succ' 3$.
	It is clear that $x_1>x_3$, so it remains to show that both $\ip{\bm p_1,\bm x_3}\geq \ip{\bm p_1,\bm x_1}$,
	and $\ip{\bm p_3,\bm x_1}\geq \ip{\bm p_3,\bm x_3}$.	
	We may assume without loss of generality that $p_i>0$ for $i=1,2,3$, since we may slightly rotate the space,
	and so we may assume without loss of generality that $p_i = 1$ for $i=1,2,3$, since scaling $\bm p$ does not affect the induced preferences.
	Thus, $\ip{\bm p_i,\bm x_j} = x_j + q_iy_j$ for all $i, j$.
	Now, $x_1>x_2$, but $x_2+q_1y_2>x_1+q_1y_1$, so we must have $y_1<y_2$.
	Similarly, we get $y_2<y_3$.
	Furthermore, we have
	\begin{equation}
		\begin{array}{rrl}
			&x_1+q_2y_1 &> x_2 + q_2 y_2\\
			-&x_1+q_1y_1 &< x_2 + q_1 y_2\\\hline
			\Rightarrow& (q_2-q_1)y_1 &> (q_2-q_1)y_2
		\end{array}
	\end{equation}
	but $y_1<y_2$, so we conclude $q_2<q_1$. Similarly, we have $q_3<q_2$.
	Now, since $q_1>q_2$ but $y_2<y_3$, we have that 
	\begin{equation}
		x_3 + q_2y_3 \geq x_2 + q_2y_2 \quad \implies\quad
		x_3 + q_1y_3 \geq x_2 + q_1y_2
	\end{equation}
	But we know that $x_2+q_1y_2 \geq x_1+q_1y_1$, so we have that $\ip{\bm p_1,\bm x_3}\geq \ip{\bm p_1,\bm x_1}$.
	The converse inequality is shown similarly, and thus we may conclude that the relation $\succ'$ is a partial order.

	Since any two elements are comparable in $\succ$ if and only if they are not comparable in $\succ'$,
	we conclude that the order dimension of $\succ$ is at most 2.
	Since its RP dimension is not 1, it must have order dimension exactly 2, as desired.\qed
\end{proof}

With this lemma in hand, we conclude that if a poset has order dimension 3, it must have RP dimension 3,
as its RP dimension is at most 3, but it cannot have dimension 1 or 2. 
Thus, we have proved the first half of Theorem~\ref{thm:rpd-bad-examples}.
It remains to show that for all $k\geq 3$, there exist posets with order dimension $k$, but RP dimension 3.

The family of {\em standard} posets $S_2,\,S_3,\,\dotsc$ is a sequence of posets such that
$S_k$ has order dimension $k$, and ground set of size $2k$. (See {\em e.g.} \cite{standardPoset}) 
They are defined as follows: 
The ground set for $S_k$ is labelled $1,\,2,\,\dotsc,\,k,\,1',\,2',\,\dotsc,\,k'$,
and we have $i'\succ j$ for all $i\neq j$. No pair $i',j'$ or $i,j$ is comparable.

We will show that $G_k = (S_k,\succ_k)$ has RP dimension 3 for all $k\geq 3$.
Let $H:=\{\bm x\in \R^3: \ip{\bm 1,\bm x}=0\}$, the plane normal to the all-ones vector in $\R^3$,
and let $\mathbb S_H$ be the unit circle in $H$ centered at the origin.
Thus, $\mathbb S_H = \{\bm x\in H:\Vert \bm x\Vert_2 = 1\}$.
Finally, let $\bm a_1,\,\dotsc,\,\bm a_n$ be $n$ equally spaced points along the circumference of~$\mathbb S_H$.
We will use these to construct our realization of $(S_k,\succ_k)$ in $\R^3$.
For all $i\leq k$, set $\bm x_i = (2+\epsilon)\bm a_i$, and $\bm x_{i'} = \bm 1 - \bm a_i$, where $\epsilon>0$ will be chosen later.
Set $\bm p_i = \bm 1 - \bm a_i$, and $\bm p_{i'} = \bm 1 + \bm a_i$.
Since the $\bm a_i$'s are unit vectors, we have that $\ip{\bm a_i,\bm a_j}<1$ if $i\neq j$, and $=1$ if $i=j$.

We have $\ip{\bm p_i,\bm x_j} = (2+\epsilon)\ip{\bm 1,\bm a_j} - (2+\epsilon)\ip{\bm a_i,\bm a_j}$.
The left hand term is 0, and the right hand term is minimized when $i=j$.
Thus, $\bm x_i$ is not revealed preferred to $\bm x_j$ for all $j\neq i$.
Furthermore, $\ip{\bm p_i,\bm x_{j'}} = \ip{\bm 1-\bm a_i,\bm 1-\bm a_j} = \ip{\bm 1,\bm 1} + \ip{\bm a_i,\bm a_j}$,
since $\ip{\bm 1,\bm a_j}=0$ for all $j$.
Thus, $\bm x_i$ is not revealed preferred to $\bm x_{j'}$ for all $j$.

Now, $\ip{\bm p_{i'},\bm x_{i'}} = \ip{\bm 1+ \bm a_i,\bm 1-\bm a_i} = \ip{\bm 1,\bm 1} - \ip{\bm a_i,\bm a_i} = 2$,
whereas $\ip{\bm p_{i'},\bm x_j} = (2+\epsilon)\ip{\bm 1+\bm a_i,\bm a_j} = (2+\epsilon)\ip{\bm a_i,\bm a_j}$.
Thus, if we choose $\epsilon>0$ sufficiently small, we have that $\bm x_{i'}$ is revealed preferred to $\bm x_j$ for all $j\neq i$, but not to $\bm x_i$.
Furthermore, $\ip{\bm p_{i'},\bm x_{j'}} = 3 - \ip{\bm a_i,\bm a_j}$, which is minimized when $i=j$, so $\bm x_{i'}$ is not revealed preferred to $\bm x_{j'}$ for all $j\neq i$.

Therefore, we have shown that our choice of $\bm p_i$'s and $\bm x_j$'s is a valid RP realization of $(S_k,\succ_k)$ in $\R^3$ for all $k\geq 3$.
Thus, we have demonstrated the existence of partial orders with order dimension $k$ but RP dimension 3, for all $k\geq 3$,
hence concluding the proof of Theorem~\ref{thm:rpd-bad-examples}.

\section{Further Work}

This paper does not address the computational complexity of computing the RP dimension of a given graph,
and this is left as an open problem for future work.
Below is a summary of the computational complexity of matrix sign rank, and what this implies for RP dimension.

\bigskip
\noindent \textbf{Complexity of Matrix Sign Rank.} Recall that the problem of computing RP dimension is denoted \revprefdim, and the problem of computing matrix sign rank, \matsgnrank.
It is known~\cite{BK15} that \matsgnrank in full generality is complete for the {\em existential theory of the reals}:
the problem of determining whether a system of polynomial equalities and inequalities has a feasible solution over the reals.
This complexity class, often denoted ``$\exists\R$'', is known to lie between \textsf{NP} and \textsf{PSPACE}.
In fact, it is $\exists\R$-complete to determine whether a matrix has sign rank at most 3. 
However, this hardness result only holds when the sign matrix is allowed to have arbitrarily many zero entries in each row and column. 
When sign matrices are constrained to have no 0 entries, \matsgnrank is known only to be \textsf{NP}-hard. 
(Again,~\cite{BK15}).
It is not known whether \matsgnrank lies in \textsf{NP} in this restricted setting,
though we think this is unlikely.
\bigskip

This paper shows that the \revprefdim problem is equivalent to computing the sign rank of signed adjacency matrices,
which are a (large) subclass of sign matrices with exactly one zero in each row and column.
Note that replacing a row containing a single zero entry with two copies, replacing the zero with a $+$ in one copy, and a $-$ in the other, does not affect the sign rank. 
Therefore, \revprefdim is a special case of \matsgnrank in the restricted setting, 
and thus it cannot be a harder problem.

We leave as an open problem determining whether \revprefdim is itself \textsf{NP}-hard, and whether it is equivalent to \matsgnrank on $+,-$ matrices.

%

 \bibliographystyle{splncs04}
 \bibliography{RP-citations}

\color{gray}
\end{document}